\definecolor{salmon}{RGB}{250,128,114}
\definecolor{lightsteelblue}{RGB}{176,196,222}
\ifCLASSOPTIONcompsoc  \usepackage[caption=false,font=normalsize,labelfont=sf,textfont=sf]{subfig}
\newtheorem{Ex}{Example}[]
\newtheorem{Def}{Definition}[]
\newtheorem{Prop}{Proposition}[]
\newtheorem{Teo}{Theorem}[]
\newtheorem{Obs}{Remark}[]
\newtheorem{Lema}{Lemma}[]
\newcommand{\Z}{\mathbb{Z}}
\newcommand{\R}{\mathbb{R}}
\newcommand{\vor}{\mathscr{V}}
\newcommand{\cod}{\mathcal{C}}
\newcommand{\ii}{i}
\begin{document}

\title{Achieving uniform side information gain with multilevel lattice codes over the ring of integers}

\author{Juliana~G. F. Souza ,~\IEEEmembership{Student Member,~IEEE}
        and~Sueli~I. R. Costa,~\IEEEmembership{Member,~IEEE,}
       \thanks{Manuscript received Month XX, 2025.}%
\thanks{Juliana G. F. Souza and Sueli I. R. Costa are with the Institute of Mathematics, Statistics and Scientific Computing (IMECC), University of Campinas (Unicamp), Campinas, São Paulo 13083-859, Brazil (e-mail: julianagfs@ime.unicamp.br / sueli@unicamp.br).}}


\maketitle

\begin{abstract}

The index coding problem aims to optimise broadcast communication by taking advantage of receiver-side information to improve transmission efficiency. In this letter, we explore the application of Construction $\pi_A$ lattices to index coding. We introduce a coding scheme, named \textit{CRT lattice index coding}, using Construction $\pi_A$ over $\Z$ to address the index coding problem. It is derived an upper bound for side information gain of a CRT lattice index code and conditions for the uniformity of this gain. The efficiency of this approach is shown through theoretical analysis and code design examples.

\end{abstract}

\begin{IEEEkeywords}
Lattice codes, Chinese remainder theorem, index coding, Construction $\pi_A$ lattices.
\end{IEEEkeywords}

\section{Introduction}

\IEEEPARstart{T}{he} index coding problem, introduced by \textit{Birk and Kol} \cite{birk1998informed}, exploits receiver side information to optimize broadcast efficiency. It aims to encode messages, meeting all receiver's demands at the highest possible rate. Researchers have studied the capacity region of the AWGN broadcast channel \cite{tuncel2006slepian, natarajan2018lattice}, and the design of codes that exploit side information for transmission gains \cite{manesh2016, natarajan2015lattice, huang2017lattice, huang2018layered} in the scenario where the transmitter does not have knowledge of each receiver side information. 

Lattice-based strategies have been effective in communication scenarios, including achieving AWGN channel capacity and good performance on wiretap transmissions, and relay networks, among others. The use of lattices for index coding was first proposed by \textit{Natarajan et al.} \cite{natarajan2015lattice, natarajan2015lattice2}, where nested lattice codes over principal ideal domains (PIDs) were considered, assuming that all receivers demand all messages. Extensions of this approach include lattices from algebraic number fields \cite{huang2017lattice} and from cyclic division algebras \cite{huang2018layered}.

In this letter, we propose an index coding scheme using Construction $\pi_A$ lattices over $\mathbb{Z}$, called \textit{CRT lattice index coding}, focusing on code designs to provide uniform side information gain. It is somehow natural to explore this construction in this context, as mentioned in \cite{huang2017construction, huang2018lattices, jsouza2024multilevel}. Construction $\pi_A$ lattice is a special case of Construction $A$ from codes over rings that can be factorised as a product of rings by the Chinese Remainder Theorem (CRT), providing lattice codes with the benefit of multilevel encoding and multistage decoding with reduced decoding complexity.

The remainder of this work is organised as follows. Section II covers preliminaries on lattices and lattice codes. We introduce the index coding problem in Section III. In Section IV it is proposed the CRT lattice index coding construction and derived an upper bound for its side information gain when this construction is obtained from same rank codes. Section V presents cases in which uniform side information gain is achieved. Conclusions and perspectives are described in Section VI.

\section{Preliminaries}
\subsection{Lattices and lattice codes}
We summarise next some concepts and properties related to lattice codes \cite{Con2013, Sue2018}. In this work, the distance and norm considered are Euclidean.

A \textit{lattice} $\Lambda$ is a discrete subset of $\mathbb{R}^n$ generated by all linear integer combinations of a set of independent vectors $\{v_1,\ldots,v_m\}$. $m$ is called the lattice \textit{rank} and we deal here only with full-rank lattices ($m=n$). A matrix $B$ whose columns are the vectors $v_i$ is called a \textit{generator matrix} of $\Lambda$. The \textit{volume} of a (full-rank) lattice is given by $\text{vol}(\Lambda)=|\det(B)|$.

Given a point $z\in\mathbb{R}^n$ and a lattice $\Lambda\subset\mathbb{R}^n$, we define $Q_{\Lambda}(z)$ as the \textit{closest lattice point} to $z$, also called the quantization of $z$,
\begin{equation}
    Q_{\Lambda}(z) = x\in\Lambda; \ ||z-x||\leq||z-y|| \quad \forall y\in\Lambda,
\end{equation}
\noindent and ties must be chosen. \textit{The Voronoi region} of a lattice $\Lambda$, $\vor_\Lambda$, is the set of all points in $\mathbb{R}^n$ that are mapped to the origin under $Q_{\Lambda}$, $\vor_{\Lambda}(0)=\vor_\Lambda = \{ z\in\R^n; \|z\|\leq\|z-y\| \ \forall y\in\Lambda\}.$

Considering a sublattice $\Lambda'\subset \Lambda$, the set $\Lambda/\Lambda'$ is called a \textit{Voronoi constellation} and it is associated with the elements of $\Lambda$ inside the Voronoi region of $\Lambda'$.

The \textit{minimum distance} $d_{\text{min}}(\Lambda)$ and the \textit{centre density} $\delta(\Lambda)$ of a lattice $\Lambda$ are defined, respectively, as
\begin{equation}
    d_{\text{min}}(\Lambda)=\min_{0\neq x\in\Lambda}||x|| \ \ \text{and} \ \ \delta(\Lambda) = \dfrac{\left(d_{\min}(\Lambda)/2\right)^n}{\text{vol}(\Lambda)}.
\end{equation}

The \textit{kissing number}, $K(\Lambda)$, of a lattice is the number of lattice points having the minimum non-vanishing norm.

A \textit{linear code} over $\mathbb{Z}_q$, the ring of integers modulo $q$, is a subset $\cod\subset\mathbb{Z}_q^n$ which is an additive subgroup of $\mathbb{Z}_q^n$. The rank of a code, $\text{rank}(\cod)$, is the minimum number of generators of $\cod$. For $q=p$ (prime number) a linear code is a vector subspace of dimension $k\leq n$, called a $(n,k)-$linear code. A code $\cod\subset\mathbb{Z}_q^n$ is said to be a \textit{free linear code} if it has a basis over $\Z_q$.

A method for obtaining lattices from linear codes is the \textit{Construction A}, \cite{Con2013, Sue2018}. Consider the mapping $\rho: \mathbb{Z}\rightarrow\mathbb{Z}_q$, the natural modulo $q$ reduction and $\sigma: \mathbb{Z}_q\rightarrow\mathbb{Z}$, the standard inclusion map, extended to vectors component-wise. Given a linear code $\cod\subset\mathbb{Z}_q^n$, the Construction A lattice associated with $\cod$, denoted by $\Lambda_A(\cod)$, is defined as
\begin{equation}
    \Lambda_A(\cod) = \rho^{-1}(\cod) = \sigma(\cod) + q\mathbb{Z}^n.
    \label{constructionA}
\end{equation}

It is shown that $\Lambda_A(\cod)$ is a full-rank lattice, $q\mathbb{Z}^n\subset\Lambda_A(\cod)\subset\mathbb{Z}^n$, $\text{vol}(\Lambda)=q^n/|\cod|$, where $|\cod|$ is the cardinality of the code $\cod$ and $d_{\min}(\Lambda_A(\cod))=\min\{d_{\min}(\cod),q\}$, \cite{Sue2018}.

\textit{Huang and Narayanan} introduced Construction $\pi_A$ lattices, in \cite{huang2017construction}. It is a special case of Construction A and generates lattices over various integer rings.
It relies on a ring isomorphism given by the Chinese Remainder Theorem (CRT) \cite[(23)]{huang2017construction}, \cite{huang2017lattice, huang2018layered, jsouza2024multilevel}.

\begin{Def}[Construction $\pi_A$ lattice, \cite{huang2017construction}]
	\label{construcao_pia}
	Let $p_1,\ldots,p_r$ be distinct primes and let $q=\prod_{j=1}^{r}p_j$. Consider $l_j, n$ to be integers such that $l_j\leq n$ and let $G_j$ be a generator matrix of an $(n,l_j)$-linear code over $\mathbb{Z}_{p_j}$ for $j\in\{1,\ldots,r\}$. Construction $\pi_A$ consists of the following steps: 1) Define the discrete codebooks $\mathcal{C}_j=\{\mathbf{x}=G_j \cdot u: u \in \mathbb{Z}_{p_j}^{l_j}\}$ for $j \in \{1,\ldots,r\}$. 2) Construct $\mathcal{C}=\phi^{-1}(\mathcal{C}_1\times\ldots\times\mathcal{C}_r)$ where $\phi^{-1}:\mathbb{Z}_{p_1}^n\times\cdots\times\mathbb{Z}_{p_r}^n \rightarrow \mathbb{Z}_q^n$ is the CRT ring isomorphism. 
    3) Tile $\mathcal{C}$ to the entire $\mathbb{R}^n$ to form $\Lambda_{\pi_A}(\mathcal{C})=\sigma(\mathcal{C})+q\mathbb{Z}^n$.
\end{Def}

Due to their multilevel structure, Construction $\pi_A$ lattices allow multistage decoding, where each class representative is decoded sequentially at each level.

\section{Problem Statement}

Consider a scenario with $r$ independent messages $w_1, \ldots, w_r$ from alphabets $W_1, \ldots, W_r$. These messages are jointly encoded into a codeword $x = f(w_1, \ldots, w_r) \in \cod$, where $\cod \subset \mathbb{R}^n$ is an $n$-dimensional constellation. The received signal at receiver $l$, $l=1,...,L$ is
\[
y_l = x + z_l, \quad z_l \sim N(0, \sigma^2).
\]

Each receiver $l$ knows a subset $w_{S_l} = \{w_j : j \in S_l\}$ of the messages, with $S_l \subset \{1, \ldots, r\}$, and estimates $(\hat{w}_1^{(l)}, \ldots, \hat{w}_r^{(l)})$ of $(w_1, \ldots, w_r)$ using $y_l$ and $w_{S_l}$. This setup is illustrated in Figure~\ref{indexscheme}.

The rates of the $j$-th message, the side information rate, and the transmission rate are,
\(
R_j = \frac{1}{n} \log_2 |W_j| \ \text{bits/dim}, \quad R_{S_l} = \sum_{j \in S_l} R_j \ \text{bits/dim}, \quad R = \sum_{j=1}^r R_j \ \text{bits/dim},\) respectively.

Let $\cod_{S_l}$ denote the subcode of $\cod$ restricted to $w_{S_l}$. The minimum Euclidean distance in $\cod$ is denoted by $d_0$, and in $\cod_{S_l}$ by $d_{S_l}$. 
The squared minimum distance gain for $w_{S_l}$ is defined as,
\(
10 \log_{10} \left(d_{S_l}^2 / d_0^2\right) \ \text{dB}.
\) The side information gain at receiver $l$ is given by \cite{natarajan2015lattice, huang2018lattices},
\[
\Gamma(\cod, S_l) = \frac{10 \log_{10} \left(d_{S_l}^2 / d_0^2\right)}{R_{S_l}} \ \text{dB/bit/dim},
\]
and the overall gain is, 
\(
\Gamma(\cod) = \min_{S_l} \Gamma(\cod, S_l) \ \text{dB/bit/dim}.
\)
We say that a lattice index code provides \textit{uniform gain} if $\Gamma(\cod, S_l)$ is constant for all $S_l$, \cite{natarajan2015lattice}. 

At high SNR, the error probability for a lattice $\Lambda$ with minimum Euclidean distance $d_{\min}(\Lambda)$ and kissing number $K(\Lambda)$ is approximated by,
{\small \(
P_e(\Lambda, \sigma^2) \approx K(\Lambda) \cdot Q\left(\sqrt{\frac{d_{\min}^2(\Lambda)}{4 \sigma^2}}\right),
\)}
 where $Q(\cdot)$ is the Gaussian $Q$-function. The symbol error rate (SER) is, \(
\text{SER}(\Lambda, \sigma^2) \approx \frac{1}{n} P_e(\Lambda, \sigma^2)
\), \cite{zamir2014lattice}.

\begin{figure}[t]
\centering
\vspace{-0.5cm}
\includegraphics[scale=.25]{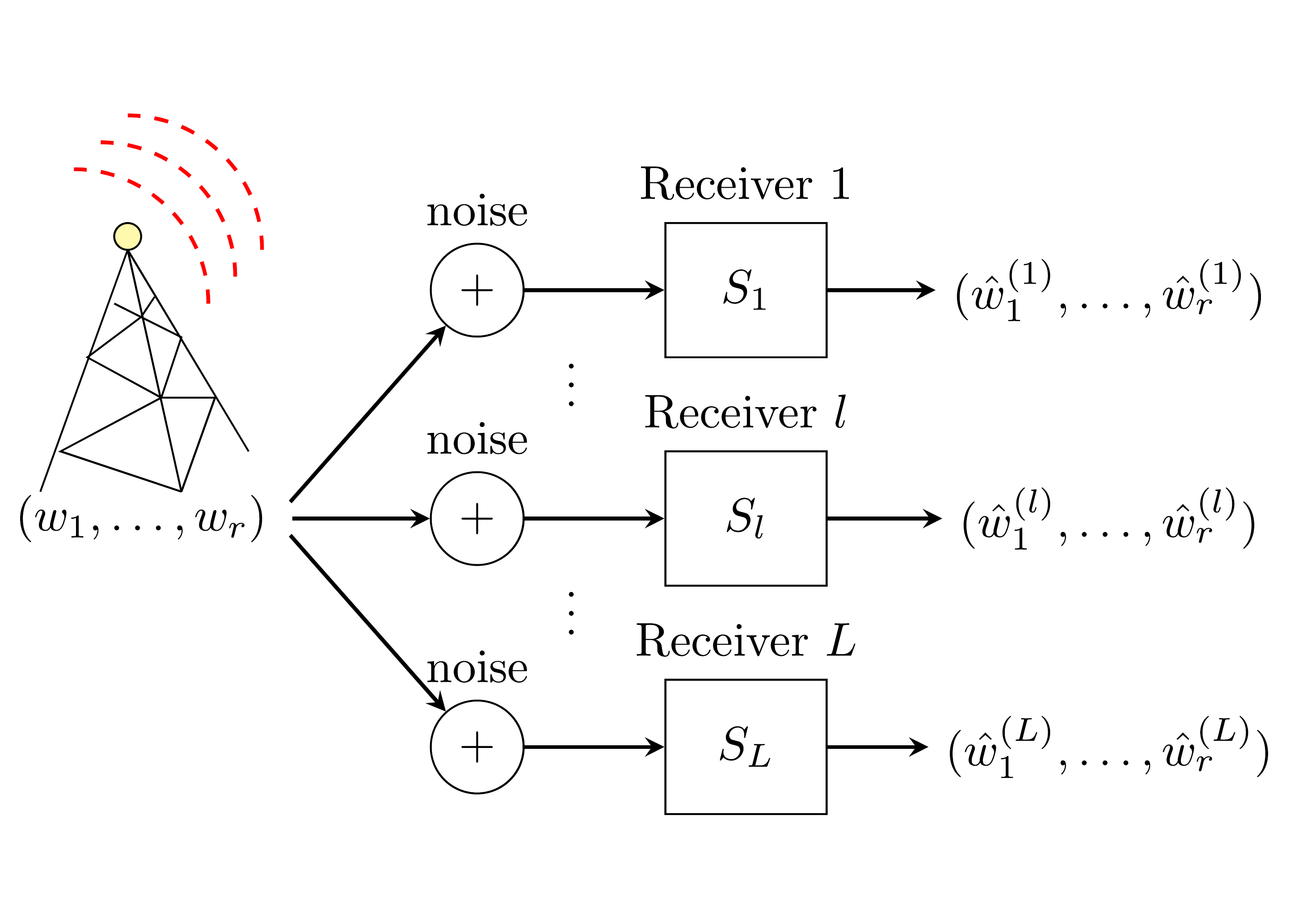}
\vspace{-.5cm}
\caption{Index coding over an AWGN channel with $r$ messages $\{w_1, \ldots, w_r\}$ and $L$ receivers. Each receiver $l$ estimates $(\hat{w}_1^{(l)}, \ldots, \hat{w}_r^{(l)})$ based on their received signal and prior knowledge $w_{S_l} \subset \{w_1, \ldots, w_r\}$, \cite{huang2017lattice}.}
\label{indexscheme}
\end{figure}

We summarise next the lattice index code construction using the Chinese Remainder Theorem of \cite{natarajan2015lattice2, natarajan2015lattice, Sue2018}. 

Given a lattice $\Lambda\subset\R^n$, let $p_1, \ldots, p_r \in \mathbb{Z}$ be $r$ distinct primes, and $q = \prod_{j=1}^r p_j$. The index code construction uses lattices $\Lambda_1, \ldots, \Lambda_r$, $\Lambda_j = \frac{q}{p_j}\Lambda$, $j = 1, \ldots, r$,  with the common sub-lattice $\Lambda' \subset \Lambda_j$, $\Lambda'=q\Lambda$ , and lattice constellations $\Lambda_1/\Lambda', \ldots, \Lambda_r/\Lambda'$ as the message alphabets, each with cardinality $p_j^n$. The encoding function is,
{\small \begin{align} 
\varphi: \Lambda_1/\Lambda' \times \cdots \times \Lambda_r/\Lambda' &\to \Lambda/\Lambda' \nonumber\\
(w_1, \ldots, w_r) &\mapsto x = (w_1 + \cdots + w_r) \mod \Lambda', \label{lic}
\end{align}}

\noindent where $w_j \in \Lambda_j/\Lambda'$ and $|\Lambda/\Lambda'| = q^n$. This mapping relates to the isomorphism provided by the Chinese Remainder Theorem, \cite{natarajan2015lattice} shows that if $\Lambda$ is the densest lattice in dimension, this construction achieves a uniform side information gain of $\approx 6.02$ dB/bit/dim.

In \cite{natarajan2015lattice} it is also considered constructions using primes over Gaussian, Eisenstein and Hurwitz quaternion integers. In \cite{huang2017lattice}, \textit{Huang} generalizes this lattice index coding scheme to rings of algebraic integers, enabling designs for scenarios with Rayleigh fading channels. 
An extension using cyclic division algebras is proposed in \cite{huang2018layered}.

\section{CRT lattice index coding}

In this section, it is introduced the lattice index coding scheme derived from Construction $\pi_A$ and presented its main properties.

\subsection{Main properties}

Let $p_1,\ldots,p_r\in\Z$ be distinct primes and let $q=\prod_{j=1}^{r}p_j$. Consider $\cod_j\subset\Z_{p_j}^n$ linear codes with $\text{rank}(\cod_j)=k_j$, $j=1,\dots r$. We will denote as $\Lambda_j=\Lambda_A(\cod_j)$ and $\Lambda = \Lambda_{\pi_A}(\cod)$, the Construction A and Construction $\pi_A$ lattice of each code $\cod_j$ and $\cod$, respectively, as in (\ref{constructionA}) and \textit{Definition \ref{construcao_pia}}. The bijective map $\phi^{-1}$, in \textit{Definition \ref{construcao_pia}}, restricted to the product $\cod_1\times \dots \cod_r$ can be written as,
\begin{align}
\Psi:& \ \Lambda_1/p_1\Z^n\times\ldots\times\Lambda_r/p_r\Z^n\rightarrow\Lambda/q\Z^n \nonumber\\
&(w_1, \ldots, w_r) \mapsto \ \sum_{j=1}^r x_jm_jw_j\mod q\Z^n.
\label{pia_restriction}
\end{align}

\noindent with $x_j$ obtained from Bèzout identity and $m_j=q/p_j$, $j=1,\ldots r$. We call a lattice code obtained by (\ref{pia_restriction}) as a \textit{CRT lattice index code}. Since the cardinality of each code is given by $|\cod_j|=p_j^{k_j}, \ j=1,\ldots,r$, then  the cardinality of $\cod=\Psi(\cod_1\times\cdots\times\cod_r)$ becomes $|\cod|=\prod_{j=1}^r|\cod_j| = \prod_{j=1}^r p_j^{k_j}$,  $\text{rank}(\cod)=\max\{k_j\}$ and the code $\cod\subset\Z_q^n$ is a free linear code if, and only if $\cod_j$ is free with the same rank $k$ for all $j$, \cite[Thm 2.4]{dougherty2017algebraic}.

For brevity in notation, let $S$ denote the set of side information at receiver $l$, we have that the rate of the $j^{th}$ message is given by,
{\small \begin{align*}
    R_j=\frac{1}{n}\log_2|\Lambda_j/p_j\Z^n|= \frac{1}{n}\log_2|\cod_j| = \log_2 p_j^{k_j/n} \text{bit}/\text{dim}.
\end{align*}}

The subcodes $\cod_{S}\subset\cod$ can be characterised as,
\begin{align*}
\cod_{S}=\{\Psi(w_1,\ldots,w_r); w_j=v_j, j\in S, w_j\in \cod, j\notin S\},    
\end{align*}

\noindent assuming that the receiver $l$ has knowledge of the side information $w_j=v_j$, for $j\in S$.

For example, when $S=\{1,\dots,s\}$ then $w_i=v_i$, for $i\in S$ we have \(\Psi (v_1,v_2,\ldots,v_s,w_{s+1},\ldots,w_r) = 
     \sum_{i=1}^s x_i m_i v_i \mod q\Z^n+\Psi (0,\ldots,0,w_{s+1},\ldots,w_r)
\)
\noindent what is a shifted version of $\Psi(0,\ldots,0,w_{s+1},\ldots,w_r)$. 

Denoting the complement of $S$ as $S^c$, let $\Lambda_{S^c}=\sum_{j\in S^c} \Lambda_j$ be the lattice obtained by the sum of $\Lambda_j, j\in S^c$, then, the subcode used for decoding at the receiver $l$ is,
\begin{equation*}
    \cod_{S}=\left\{\left(\sum_{j\in S} x_j m_j v_j + \Lambda_{S^c}\right) \mod q\mathbb{Z}^n \right\},
\end{equation*}

\noindent that is, $\cod_{S}$ is obtained by the translation of the Voronoi constellation $\Lambda_{S^c}/q\Z^n$ by the vector $\sum_{j\in S} x_j m_j v_j \mod q\mathbb{Z}^n$ and the minimum distance in $\cod_{S}$ is  $d_{S}=d_{\text{min}}\left(\Lambda_{S^c} \right)$.

The next proposition establishes the main properties of the CRT lattice index codes.

\begin{Prop}
\label{result}
For the lattices $\Lambda_1,\ldots, \Lambda_r$ and $\Lambda$ as in (\ref{pia_restriction}), the following is true, 
(i) $\text{vol}(\Lambda_{S^c})=\prod_{j\in S} p_j^{k_j}\text{vol}(\Lambda)$; (ii) $d_0\leq d_{S}\leq \prod_{j\in S} p_j d_0$, where $d_S = d_{\min}(\Lambda_{S^c})$ and $d_0 = d_{\min}(\Lambda)$.
\end{Prop}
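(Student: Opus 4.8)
The plan is to reduce everything modulo $q\Z^n$ and use the Chinese Remainder Theorem isomorphism $\Z^n/q\Z^n\cong\Z_{p_1}^n\times\cdots\times\Z_{p_r}^n$, under which $\Lambda/q\Z^n$ is identified with $\cod\cong\cod_1\times\cdots\times\cod_r$. The first (and most delicate) step is to determine the image of $\Lambda_{S^c}$ under this reduction. Since $\Psi(0,\ldots,w_j,\ldots,0)=x_jm_jw_j\bmod q\Z^n$, with $p_i\mid m_j$ for every $i\neq j$ and $x_jm_j\equiv 1\pmod{p_j}$, such a vector reduces to $w_j\bmod p_j$ in coordinate $j$ and to $0$ in every other coordinate; as $w_j$ ranges over $\Lambda_j/p_j\Z^n\cong\cod_j$, these vectors sweep out a copy of $\cod_j$ placed in the $j$-th slot. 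Hence $\Lambda_{S^c}/q\Z^n$ corresponds, under the CRT isomorphism, to the subgroup
\[
D_{S^c}=\{(a_1,\ldots,a_r):a_j\in\cod_j \text{ for }j\in S^c,\ a_j=0 \text{ for }j\in S\}\subseteq\cod,
\]
which has order exactly $\prod_{j\in S^c}p_j^{k_j}$; in particular $q\Z^n\subseteq\Lambda_{S^c}\subseteq\Z^n$.

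Part (i) is then a counting argument. Because $\Lambda=\Lambda_{\pi_A}(\cod)$ is a Construction A lattice, $\text{vol}(\Lambda)=q^n/|\cod|=q^n/\prod_{j=1}^r p_j^{k_j}$, and similarly $\text{vol}(\Lambda_{S^c})=q^n/[\Lambda_{S^c}:q\Z^n]=q^n/|D_{S^c}|=q^n/\prod_{j\in S^c}p_j^{k_j}$. Dividing yields $\text{vol}(\Lambda_{S^c})=\big(\prod_{j\in S}p_j^{k_j}\big)\text{vol}(\Lambda)$.

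For the lower bound in (ii): since $D_{S^c}\subseteq\cod$ (set the slots in $S$ to $0\in\cod_j$), we get $\Lambda_{S^c}\subseteq\Lambda$, hence $d_S=d_{\min}(\Lambda_{S^c})\ge d_{\min}(\Lambda)=d_0$. For the upper bound, take any $v\in\Lambda$ with $\|v\|=d_0$ and put $P_S=\prod_{j\in S}p_j$. Writing $v\bmod q\Z^n\leftrightarrow(v_1,\ldots,v_r)$ with $v_j\in\cod_j$, and using that the CRT map is a ring homomorphism, $P_Sv\bmod q\Z^n\leftrightarrow(P_Sv_1\bmod p_1,\ldots,P_Sv_r\bmod p_r)$: the $j$-th entry is $0$ when $j\in S$ (as $p_j\mid P_S$) and lies in $\cod_j$ when $j\in S^c$ (because $\cod_j$ is $\F_{p_j}$-linear, hence closed under scalar multiplication). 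Therefore $P_Sv\in\Lambda_{S^c}$, $P_Sv\neq 0$, and $d_S\le\|P_Sv\|=P_S\|v\|=\big(\prod_{j\in S}p_j\big)d_0$. The two estimates give (ii).

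I expect the main obstacle to be the bookkeeping in the first paragraph: making the ``slot-wise'' description of $\Lambda_{S^c}\bmod q\Z^n$ precise and verifying that $|D_{S^c}|=\prod_{j\in S^c}p_j^{k_j}$ exactly (no spurious elements). Everything after that is the short index and scaling computation above.
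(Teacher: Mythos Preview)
Your proof is correct and follows essentially the same route as the paper's: both compute the volume ratio via $[\Lambda_{S^c}:q\Z^n]=\prod_{j\in S^c}p_j^{k_j}$, obtain the lower bound from $\Lambda_{S^c}\subseteq\Lambda$, and obtain the upper bound by scaling by $P_S=\prod_{j\in S}p_j$ to land inside $\Lambda_{S^c}$ (you do this with a single minimal vector $v\mapsto P_Sv$, while the paper phrases it as $\Lambda_A(P_S\cdot\cod)\subseteq\Lambda_{S^c}$ with $d_{\min}=P_S\,d_0$). Your explicit CRT ``slot-wise'' identification of $\Lambda_{S^c}/q\Z^n$ with $D_{S^c}$ just makes precise what the paper uses implicitly.
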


\begin{proof} 
(i) Consider $S = \{s+1, \ldots, r\}$ and $S^c = \{1, \ldots, s\}$,
\[
\frac{\text{vol}(\Lambda_{S^c})}{\text{vol}(\Lambda)} = \frac{q^n / \prod_{j \in S^c} p_j^{k_j}}{q^n / \prod_{j=1}^r p_j^{k_j}} = \prod_{j \in S} p_j^{k_j}.
\]

(ii) Since $\sum_{j \in S^c} \Lambda_j \subset \Lambda$, it follows that $d_0 \leq d_S$. Moreover, the Construction A of $\prod_{j \in S} p_j \cdot \cod$ we have,
{\small \[
\Lambda_A\left(\prod_{j \in S} p_j \cdot \cod\right) = \prod_{j\in S} p_j \left( \Lambda_{S^c}\right) \subset \sum_{j \in S^c} \Lambda_j.
\]}
Hence,
\(
d_{\min}\left(\Lambda_A\left(\prod_{j \in S} p_j \cdot \cod\right)\right) = \prod_{j \in S} p_j \cdot d_0\)
and then,  \(d_S \leq \prod_{j \in S} p_j \cdot d_0.\)
\end{proof}

\begin{Obs}
    If we consider $k=n$, the CRT lattice index code introduced here is equivalent to those described in \cite{natarajan2015lattice}, specifically for $\Lambda = \Z^n$. Therefore, to explore cases not addressed in the literature, we will focus exclusively on the case $k < n$.
\end{Obs}

\subsection{An upper bound for side information gain}

To analyze the side information gain, let $S$ be the set of side information at receiver $l$ and $d_{S} = d_{\min}(\Lambda_{S^c})$ be the minimum distance of the subcode $\cod_S$. Since $R =\sum_{j=1}^r R_j = \frac{1}{n} \sum_{j=1}^r \log_2 |\Lambda_j / p_j \mathbb{Z}^n|$, it follows that,
\(
R = \frac{1}{n}\log_2 |\Lambda/q\mathbb{Z}^n|.
\)

The side information rate for this receiver $l$ is,
\(
R_S = R - \sum_{j \in S^c} R_j = \frac{1}{n} \log_2 \left(\frac{\text{vol}(\Lambda_{S^c})}{\text{vol}(\Lambda)}\right).
\)

As shown in \cite{natarajan2015lattice}, the centre densities of $\Lambda$ and $\Lambda_{S^c}$ allow us to derive a bound for the side information gain. If $\delta(\Lambda) \geq \delta(\Lambda_{S^c})$ for all $S$, then,
\(
\Gamma(\cod) = \min_S \frac{10 \log_{10}(d_{S}^2 / d_0^2)}{R_S} \leq 20 \log_{10} 2 \approx 6.02 \text{ dB/bit/dim}.
\)

However, $\delta(\Lambda) \geq \delta(\Lambda_{S^c})$ is not always true, but it is still possible to derive an upper bound. 

Consider the special case of CRT lattice index code when $\text{rank}(\cod_j)=k$ for all $j=1,\dots r$. Since, by \textit{Proposition \ref{result} (i)}, $\text{vol}(\Lambda_{S^c})/\text{vol}(\Lambda) = \prod_{j \in S} p_j^{k}$, then the side information rate is,
\(
R_S = \log_2 \left(\prod_{j \in S} p_j^{k / n}\right).
\) 
From \textit{Proposition \ref{result} (ii)}, we obtain, $\log_{10}(d_S / d_0) \leq \log_{10} \left(\prod_{j \in S} p_j^{k / n}\right)$, then,
\begin{equation}
\Gamma(\cod) \leq 20 \log_{10} 2 \cdot \frac{\log_{10} \left(\prod_{j \in S} p_j\right)}{\log_{10} \left(\prod_{j \in S} p_j^{k / n}\right)} \leq \frac{n}{k} \cdot 20 \log_{10} 2.    
\label{crtbound}
\end{equation}

This provides an upper bound for the side information gain of a CRT lattice index coding derived from the same rank codes. A previous analysis of this bound, considering codes with different ranks, can be found in \cite{juliana2024indexcnmac}.

\section{CRT lattice index coding with uniform side information gain}

In this section, we construct a CRT lattice index code with uniform side information gain. To achieve this, we consider that all codes $\cod_j$ in this construction must have the same rank $k$. Additionally, we restrict the generators of the codes to ensure compatible minimum distances for any index set $S$.

First, we focus on codes generated by $k$ linearly independent vectors over $\Z_{p_j}^n$ with minimum distance $1$ and construct a CRT lattice index code $\cod$ such that $d_{\min}(\cod)=1$. 

One way to achieve this is as follows. Considering linear codes $\cod_1, \dots, \cod_r$ over $\mathbb{Z}_{p_1}^n, \dots, \mathbb{Z}_{p_r}^n$, each generated by $k$ canonical vectors with $\mathbf{e}_l$ a common generator for all codes. For each $\cod_j$, we have $d_{\min}(\cod_j) = 1, j=1,\dots r$. 

Let $\cod = \Psi(\cod_1\times\cdots\times\cod_r)$ be the CRT lattice index coding obtained by $\cod_1, \dots, \cod_r$. Since $\mathbf{e}_l$ belongs to all $\cod_j$, the minimum distance of $\cod$ is achieved with $c = \sum_{i=1}^r x_i m_i \mathbf{e}_l \mod q = \mathbf{e}_l$, resulting in $d_{\min}(\cod) = 1$. This code achieves the upper bound for side information gain in (\ref{crtbound}). In fact, for any subset $S$, $d_{\min}(\cod_S)=1$ then we have,
\[
d_S = \prod_{j \in S} p_j \cdot d_{\min}(\cod_S) = \prod_{j \in S} p_j,
\]
and the corresponding side information rate is,
\(
R_S = \log_2\left(\prod_{j \in S} p_j^{k/n}\right) = \frac{k}{n} \log_2\left(d_S/d_0\right).
\)
Thus, for all $S$, the side information gain is,
\begin{align*}
\Gamma(\cod) &= \frac{20 \log_{10}(d_S/d_0)}{(k/n)\log_2(d_S/d_0)} = \frac{n}{k} \cdot 20 \log_{10} 2.    
\end{align*}

Note that for sufficiently large dimension and using codes with rank $k=n-1$, when $n\rightarrow\infty$ we approach $\approx 6.02$ dB/bit/dim as expected.

A different analysis for uniform side information gain can be made for rank-$1$ codes over $\mathbb{Z}_q^n$. In this case, we will focus on numbers expressible both as a sum of $N$ squares and as a product of primes, where each prime $p_j$ can also be written as a sum of $N$ squares, i.e.,
\begin{equation}
q = \sum_{i=1}^N x_i^2 = \prod_{j=1}^r p_j, \quad p_j = \sum_{i=1}^N (a_i^{(j)})^2.    
\label{sumquares}
\end{equation}

This problem relates to the multiplicative quadratic form problem \cite[Chap. 2]{Pfister_1995}, and has explicit solutions for $N=2,4,8$. We consider the CRT lattice index code from codes $\cod_j = \langle g_j \rangle = \langle (x_1, \dots, x_N) \mod p_j \rangle \subset \mathbb{Z}_{p_j}^N$. The sum of squares condition with other requirements ensures uniform side information gain in CRT lattice index coding, which depends on the number of squares considered as shown next.

\begin{Lema} \label{uniform_construction}
    Let $q$ be an integer that can be expressed as a sum of $N$ squares, i.e.,  $q = x_1^2 + \dots + x_N^2$. Then for the linear code $\mathcal{C} = \langle (x_1, \dots, x_N) \rangle$ we have $d_{\min}(\mathcal{C}) = \sqrt{q}$.
\end{Lema}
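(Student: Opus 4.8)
The plan is to prove the two inequalities $d_{\min}(\mathcal C)\le\sqrt q$ and $d_{\min}(\mathcal C)\ge\sqrt q$ directly, the second one via the observation that the squared Euclidean norm of every codeword of $\mathcal C$ (indeed, of every point of the associated Construction~$A$ lattice $\sigma(\mathcal C)+q\Z^N$) is divisible by $q$.

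Set $\mathbf x=(x_1,\dots,x_N)$ and, after replacing each $x_i$ by $|x_i|$ (which leaves $x_i^2$, hence $q$, unchanged), assume $0\le x_i\le\sqrt q$. The upper bound is then immediate: $\mathbf x$ is the standard representative of the nonzero codeword $(x_1,\dots,x_N)\bmod q\in\mathcal C$, and $\|\mathbf x\|^2=x_1^2+\dots+x_N^2=q$, so $d_{\min}(\mathcal C)\le\sqrt q$.

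For the lower bound, recall that any codeword of $\mathcal C=\langle\mathbf x\bmod q\rangle$ has the form $(a\mathbf x)\bmod q$ for some integer $a$, so its standard representative can be written as $\mathbf v=a\mathbf x+q\mathbf z$ with $\mathbf z\in\Z^N$. Expanding and using the hypothesis $\|\mathbf x\|^2=q$,
\[
\|\mathbf v\|^2=a^2\|\mathbf x\|^2+2aq\,\langle\mathbf x,\mathbf z\rangle+q^2\|\mathbf z\|^2=q\bigl(a^2+2a\,\langle\mathbf x,\mathbf z\rangle+q\|\mathbf z\|^2\bigr),
\]
and the factor in parentheses is an integer. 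Hence $q\mid\|\mathbf v\|^2$ for every codeword; if the codeword is nonzero then $\mathbf v\neq 0$, so $\|\mathbf v\|^2$ is a strictly positive integer multiple of $q$ and therefore $\|\mathbf v\|^2\ge q$. This gives $d_{\min}(\mathcal C)\ge\sqrt q$, and combined with the previous paragraph, $d_{\min}(\mathcal C)=\sqrt q$.

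I do not anticipate a real obstacle; the argument uses nothing beyond the defining identity $\sum x_i^2=q$ and the expansion of a squared norm. The only points to state with care are the harmless normalisation $x_i\ge 0$ so that $\mathbf x$ is a genuine minimum-weight representative, and the fact that the divisibility $q\mid\|\mathbf v\|^2$ yields the bound $\|\mathbf v\|^2\ge q$ precisely because nonzero codewords give nonzero $\mathbf v$. Notably no coprimality or squarefreeness assumption on $q$ (or on the $x_i$) is needed, so the same reasoning applies verbatim to the codes $\mathcal C_j=\langle(x_1,\dots,x_N)\bmod p_j\rangle$ appearing in the construction that follows.
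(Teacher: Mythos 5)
Your proof is correct and follows the same route as the paper's terse argument: in both, the decisive fact is that $q\mid\|\mathbf v\|^2$ for every integer representative $\mathbf v=a\mathbf x+q\mathbf z$ of a codeword (the paper writes this as ``$\|x\|^2\bmod q=0$''), which forces $\|\mathbf v\|\geq\sqrt q$ whenever $\mathbf v\neq 0$, while $\mathbf x$ itself attains $\|\mathbf x\|=\sqrt q$; you simply make explicit the norm expansion and the sign normalisation that the paper leaves implicit. The only caveat is your closing remark: applying the lemma to $\cod_j=\langle(x_1,\dots,x_N)\bmod p_j\rangle$ is not quite ``verbatim,'' since one still needs the residue $(x_1,\dots,x_N)\bmod p_j$ to be a scalar multiple mod $p_j$ of a vector whose coordinate squares sum to $p_j$, which is exactly what the paper's system (\ref{system}) is introduced to guarantee.
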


\begin{proof}
Let $g = (x_1, \dots, x_N)$, with $\|g\|^2= x_1^2 + \dots + x_N^2 = q$.  We have that $\cod=\{x = u\cdot g; u\in\Z_q\}\subset\Z_q^N$ satisfies $d_{\min}(\cod) = \sqrt{q}$ because $\|x\|^2 \mod q= 0 $ then $\min\|x\|^2 = q$ achieved for $x = g$.
\end{proof}

An additional requirement is that for the codes $\cod_j$, $d_{\min}(\cod_j)=\sqrt{p_j}$. To assure this we include the condition that $\cod_j=\langle (x_1, \dots, x_N) \mod p_j \rangle = \langle (a_1^{(j)},\dots a_n^{(j)})$, what is equivalent to have solution for the following systems of congruences with $\lambda_j \in \mathbb{Z}_{p_j}$,
{\small \begin{equation}
    \label{system}
    (a_1^{(j)},\dots, a_N^{(j)})\equiv \lambda_j (x_1,\dots, x_N)\mod p_j \quad \forall \quad j=1,...,r.
\end{equation}}

\begin{Ex}
    Consider $N=4$, $q=133=7 \cdot 19$ and note that $133 = 1^2 + 2^2 + 8^2 + 8^2 = 5^2 + 6^2 + 6^2 + 6^2$. 
    
    Assuming $133 = 1^2 + 2^2 + 8^2 + 8^2$, we have no solution for $\lambda_2$ by considering any choice of decompositions $(a_1^{(2)})^2 + (a_2^{(2)})^2 + (a_3^{(2)})^2 + (a_4^{(2)})^2 = 19$. For $133 = 5^2 + 6^2 + 6^2 + 6^2$, the system (\ref{system}) has solution $\lambda_1 = 1$ or $-1$ what gives $(a_1^{(1)}, a_2^{(1)}, a_3^{(1)}, a_4^{(1)}) = (-2,-1,- 1,- 1)$ or $(2, 1, 1, 1)$, and $\lambda_2 = 3$ or $-3$ giving $(a_1^{(2)}, a_2^{(2)}, a_3^{(2)}, a_4^{(2)}) = (-4, -1, -1, -1)$ or $(4, 1, 1, 1)$. Thus, we can take $\cod=\langle (5,6,6,6)\rangle\subset\Z_{133}^4$, $\cod_1 = \langle (2,1,1,1)\rangle\subset\Z_7^4$ and $\cod_2=\langle (4,1,1,1)\rangle\subset\Z_{19}^4$.
\end{Ex}

\begin{Obs}
\label{dimension_greater}
We can also consider a Cartesian product of the previous rank$-1$ codes (\ref{sumquares}) to construct CRT lattice index codes with uniform side information gain. By taking the Cartesian product of $m$ codes $\psi: \cod_j \times \cdots \times \cod_j \rightarrow \tilde{\cod}_j \subset \mathbb{Z}_{p_j}^{mN}$, we obtain $\tilde{\cod}_j = \langle g_{j1}, \dots, g_{jm} \rangle$ where $g_{ji} = (0,\dots, 0, x_1, \dots, x_N, 0,\dots, 0)\mod p_j$. Therefore, we can use the codes $\tilde{\cod}_1 \subset \mathbb{Z}_{p_1}^{mN}, \dots, \tilde{\cod}_r \subset \mathbb{Z}_{p_r}^{mN}$ and the map $\Psi$ to obtain $\tilde{\cod} = \Psi(\tilde{\cod}_1 \times \cdots \times \tilde{\cod}_r)\subset\Z_q^{mN}$. The code $\tilde{\cod}$ in dimension $mN$ has $\text{rank}(\tilde{\cod})=m$. For simplicity, we will denote the codes $\tilde{\cod}$ and $\tilde{\cod}_j$ obtained by the Cartesian product as $\cod$ and $\cod_j$, respectively, with the dimension understood from the context.
\end{Obs}

To achieve uniform side information gain in the above construction we will need that for any side information set $S$ the subcodes $\cod_S$ have minimum distance $\sqrt{\prod_{j\in S^c} p_j}$. We then require that the sum of squares (\ref{sumquares}) and the systems (\ref{system})  be valid for all possible products of $p_j, j=1,\dots, r$.

\begin{Teo}\label{sidegainrank1}
Let $p_j\in \mathbb{Z}$ be primes, $j\in I=\{1,...,r\}$, and let $q = \prod_{j=1}^r p_j = x_1^2 + \dots + x_N^2$ 
assume that the systems 
    {\small \[
     \left\{\begin{array}{l}
    (b_1^{(k)})^2 + \dots + (b_N^{(k)})^2 = \displaystyle\prod_{\{k_1,\dots, k_L\}\subset I} \hspace{-.6cm} p_{k_1}\dots p_{k_L}\\
        \lambda_k (b_1^{(k)}, \dots, b_N^{(k)}) \equiv (x_1, \dots, x_N) \mod\hspace{-.6cm} \displaystyle\prod_{\{k_1,\dots, k_L\}\subset I} \hspace{-.6cm} p_{k_1}\dots p_{k_L},    \end{array} \right.\] }
    
\noindent have a solution for $\lambda_k \in \mathbb{Z}$ with $k=1,\dots, 2^r-1$. Consider $n = mN$, the codes $\cod_j \subset \mathbb{Z}_{p_j}^n$, with $\text{rank}(\cod_j) = m$, and the CRT lattice index code $\cod = \Psi(\cod_1 \times \cdots \times \cod_r) \subset \mathbb{Z}_q^n$, with $\text{rank}(\cod) = m$. Then, the uniform side information gain is,
\[
\Gamma(\mathcal{C}) = \frac{N}{2} \cdot 20 \log_{10} 2 \ \text{dB/bit/dim}.
\]
\end{Teo}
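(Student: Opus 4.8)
The plan is to compute, for each side-information set $S$, the two ingredients of $\Gamma(\cod,S)=10\log_{10}(d_S^2/d_0^2)/R_S$, namely the ratio $d_S^2/d_0^2$ and the side information rate $R_S$, to observe that this ratio does not depend on $S$, and then to read off the common value. Since all codes $\cod_j$ have the common rank $m$ and $n=mN$, Proposition~\ref{result}(i) gives $\text{vol}(\Lambda_{S^c})/\text{vol}(\Lambda)=\prod_{j\in S}p_j^{m}$, hence
\[
R_S=\frac1n\log_2\prod_{j\in S}p_j^{m}=\frac1N\log_2\prod_{j\in S}p_j .
\]
For $d_0=d_{\min}(\Lambda)$: because of the Bézout normalisation ($x_jm_j\equiv1\bmod p_j$ and $\equiv0\bmod p_i$ for $i\neq j$), the CRT combination of the rank-$1$ generators $(x_1,\dots,x_N)\bmod p_j$ is $(x_1,\dots,x_N)\bmod q$, and the Cartesian product of Remark~\ref{dimension_greater} only appends zero blocks; so $d_{\min}(\cod)=\sqrt q$ by Lemma~\ref{uniform_construction}, and since $\sqrt q\le q$ the Construction~$A$ distance formula gives $d_0=\sqrt q$.

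The heart of the argument is the computation of $d_S=d_{\min}(\Lambda_{S^c})$. Put $P_S=\prod_{j\in S}p_j$ and $q'=q/P_S=\prod_{j\in S^c}p_j$. For $j\in S^c$ one has $S\subset\{i:i\neq j\}$, so $P_S\mid m_j$; together with $P_S\mid q$ this shows $\Lambda_{S^c}=\sum_{j\in S^c}m_j\Lambda_j+q\Z^n\subset P_S\Z^n$. Dividing by $P_S$ turns $\Lambda_{S^c}$ into the Construction~$A$ lattice (modulo $q'$) of exactly the CRT lattice index code built from the codes $\cod_j$, $j\in S^c$ (same reasoning as for $\Psi$ in (\ref{pia_restriction})), whose rank-$1$ block is $\langle(x_1,\dots,x_N)\bmod q'\rangle\subset\Z_{q'}^{N}$. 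Now invoke the hypothesis for the subset $S^c$: it supplies $b_i$ with $\sum_i b_i^2=q'$ and a $\lambda$ with $\lambda(b_1,\dots,b_N)\equiv(x_1,\dots,x_N)\bmod q'$. The key small observation is that $\lambda$ is a unit modulo $q'$: if $p_j\mid\lambda$ for some $j\in S^c$ then $p_j\mid x_i$ for all $i$, forcing $p_j^2\mid q$ and contradicting that $q$ is squarefree. Hence $\langle(x_1,\dots,x_N)\bmod q'\rangle=\langle(b_1,\dots,b_N)\bmod q'\rangle$, which has minimum distance $\sqrt{q'}$ by Lemma~\ref{uniform_construction}; since $\sqrt{q'}\le q'$, Construction~$A$ preserves it, and scaling back by $P_S$ gives $d_S=P_S\sqrt{q'}=\sqrt{q\,P_S}$, i.e. $d_S^2/d_0^2=P_S=\prod_{j\in S}p_j$ (consistent with Proposition~\ref{result}(ii)).

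Substituting the two ingredients,
\[
\Gamma(\cod,S)=\frac{10\log_{10}\!\bigl(\prod_{j\in S}p_j\bigr)}{\tfrac1N\log_2\!\bigl(\prod_{j\in S}p_j\bigr)}=10N\log_{10}2=\frac N2\cdot20\log_{10}2 ,
\]
which is the same for every $S$; therefore the gain is uniform and $\Gamma(\cod)=\min_S\Gamma(\cod,S)=\tfrac N2\cdot20\log_{10}2$ dB/bit/dim. I expect the only genuine obstacle to be the middle step: making precise the structural identification of $\Lambda_{S^c}$ as $P_S$ times the Construction~$A$ lattice of the sub-CRT code (the divisibility by $P_S$ and the reduction modulo $q'\Z^n$), and using that the unit $\lambda$ lets Lemma~\ref{uniform_construction} pin the minimum distance down exactly rather than only up to an inequality. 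Everything else is bookkeeping with the Construction~$A$ volume and minimum-distance formulae.
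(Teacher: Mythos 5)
Your proof is correct and follows essentially the same route as the paper's: compute $d_S/d_0$ and $R_S$ via Lemma~\ref{uniform_construction} and Proposition~\ref{result}, then observe that their ratio is independent of $S$. You in fact supply more detail than the paper, which applies Lemma~\ref{uniform_construction} directly to $\cod_S$ without spelling out the scaling-by-$\prod_{j\in S}p_j$ identification of $\Lambda_{S^c}$ with the Construction~A lattice of the sub-CRT code, nor the check that $\lambda$ is a unit modulo $q'$ (needed to get equality of codes rather than mere containment), both of which you make explicit.
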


\begin{proof}
Without loss of generality, let $S = \{s+1, \dots, r\}$. By \textit{Lemma \ref{uniform_construction}}, the minimum distance of $\cod_{S}$ is $d_{\min}(\cod_{S}) = \prod_{j \in S^c} p_j^{1/2}$. 
The distance $d_S$ is, 
\[
d_S = d_{\min}(\Lambda_{S^c}) = \prod_{j \in S} p_j \cdot \prod_{j \in S^c} p_j^{1/2}.
\]

Therefore we can derive the side information rate,
{\small \[
R_S = \log_2\prod_{j \in S} p_j^{m/mN} = \frac{2}{N}\log_2\prod_{j \in S} p_j^{1/2} = \frac{2}{N} \log_2\left(\frac{d_S}{d_0}\right),
\]}
obtaining the following uniform side information gain,
\[
\Gamma(\cod) = \frac{20 \log_{10}(d_S/d_0)}{(2/N) \log_2(d_S/d_0)} = \frac{N}{2} \cdot 20 \log_{10} 2.
\]
\end{proof}

Note that for the sum of two squares ($N=2$), we achieve the side information gain of $\approx 6.02$ dB/bit/dim (see \textit{Remark} \ref{finalremark}). For $N>2$, we get a greater side information gain in a higher dimensional lattice code. It should be also expected a worse sphere packing density in $\Lambda$ when compared to $\Lambda_{S^c}$, as remarked for the index codes in \cite{natarajan2015lattice}.

\begin{Ex}
\label{exemplo}
Consider $m=1, N=3$, the prime numbers $p_1=3$, $p_2 = 11$ and $p_3=17$ and their product $q=561=13^2 + 14^2 + 14^2$. Let $\mathcal{C}_1 = \langle (13, 14, 14)\mod 3) \rangle = \langle (1, -1, -1) \rangle$, $\mathcal{C}_2 = \langle \ 4\cdot (13, 14, 14) \mod 11 \rangle = \langle (-3, 1, 1) \rangle$ and $\mathcal{C}_3 = \langle \ 5\cdot (13, 14, 14) \mod 17 \rangle = \langle (-3, 2, 2) \rangle$. The codes $\cod_1, \cod_2$ and $\cod_3$ have minimum distance $d_{\min}(\cod_1) = \sqrt{3}, d_{\min}(\cod_2) = \sqrt{11}$ and $d_{\min}(\cod_3)=\sqrt{17}$, respectively. The code $\cod$ has minimum distance $d_{\min}(\cod) = \sqrt{561}$. Table \ref{distance_ex6} describes the minimum distance, message rate, and information gain for all non-trivial index sets. Figure \ref{ser_ex6} shows the SNR versus symbol error rate. Note that for any subset $S$, we always achieve the same side information gain.

\begin{table}[htpb]
        \centering
        \begin{tabular}{c|c|c|c}
          Index set   &  Minimum distance & Message rate & $\Gamma(\cod, S)$\\
          \hline
          \hline
     $S = \{1\}$  & $3\sqrt{187}$ & $1/3 \log_2 3$ & 9.03 \\
     $S = \{2\}$ & $11\sqrt{51}$ & $1/3 \log_2 11$ & 9.03 \\
     $S = \{3\}$ & $17\sqrt{33}$ & $1/3 \log_2 17$ & 9.03 \\
     $S = \{1,2\}$ & $33\sqrt{17}$ & $1/3 \log_2 33$ & 9.03 \\
     $S = \{1,3\}$ & $51\sqrt{11}$ & $1/3 \log_2 51$ & 9.03 \\
     $S = \{2,3\}$ & $187\sqrt{3}$ & $1/3 \log_2 187$ & 9.03 
     \vspace{.1cm}
        \end{tabular}
        \caption{Minimum distance, message rate, and side information gain in each subcode $\cod_S$.}
        \label{distance_ex6}
    \end{table}
    \begin{figure}[htpb]
		\centering
    		\includegraphics[scale=.17]{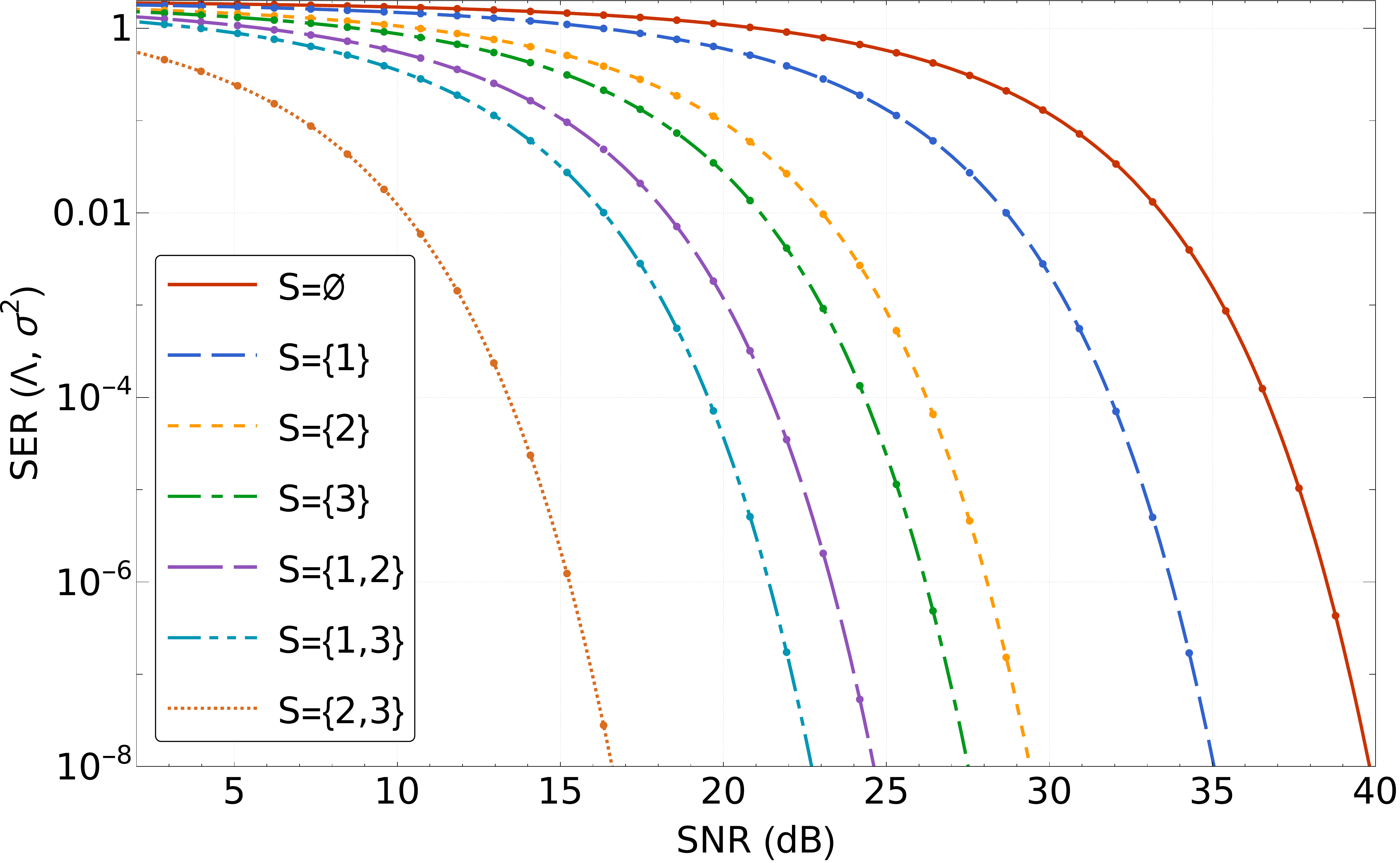} 
		\caption{SNR versus symbol error rate over the AWGN channel for the code of \textit{Example \ref{exemplo}}.}
		\label{ser_ex6}
\end{figure}
\end{Ex}

\begin{Obs} \label{finalremark}
Consider the lattice obtained via Construction $\pi_A$ from the codes \(\mathcal{C}_1 = \langle (a_1, b_1) \rangle \subset \mathbb{Z}_{p_1}^2\) and \(\mathcal{C}_2 = \langle (a_2, b_2) \rangle \subset \mathbb{Z}_{p_2}^2\), where \(a_j^2 + b_j^2 = p_j\), $j=1,2$. This setup allows us to obtain the one dimensional code $\cod = \Psi(\cod_1\times\cod_2) = \langle a,b\rangle\subset\Z_q^2$, with $q = a^2 + b^2 = (a_1a_2 -b_1b_2)^2 + (a_2b_1 + a_1 b_2)^2$. The code $\cod$ is given by the quotient of the lattices $\Lambda'/q\Z^2$, where $\Lambda'=\Lambda_{\pi_A}(\cod)$ is generated by the set $\{(a,b),(a^2+b^2,0),(0,a^2+b^2)\}$,\cite{Sue2018}, which has basis $\{(a,b), (-b,a)\}$. It should be noticed that this CRT lattice index code is a rotated scaled version of the lattice index code using Gaussian integers $\Z[\ii]$, presented in \cite{natarajan2015lattice}. Identifying $\Z[\ii]$ with the lattice $\Z^2$, in (\ref{lic}) we have \(\varphi(\Lambda_1/\Lambda'\times\Lambda_2/\Lambda') = \Z^2/\Lambda',\) with $\Lambda_j = \langle(a_j, b_j), (-b_j, a_j)\rangle, j=1,2$.

\vspace{-.1cm}
\begin{figure}[htpb]
    \centering
    \includegraphics[width=0.49\linewidth]{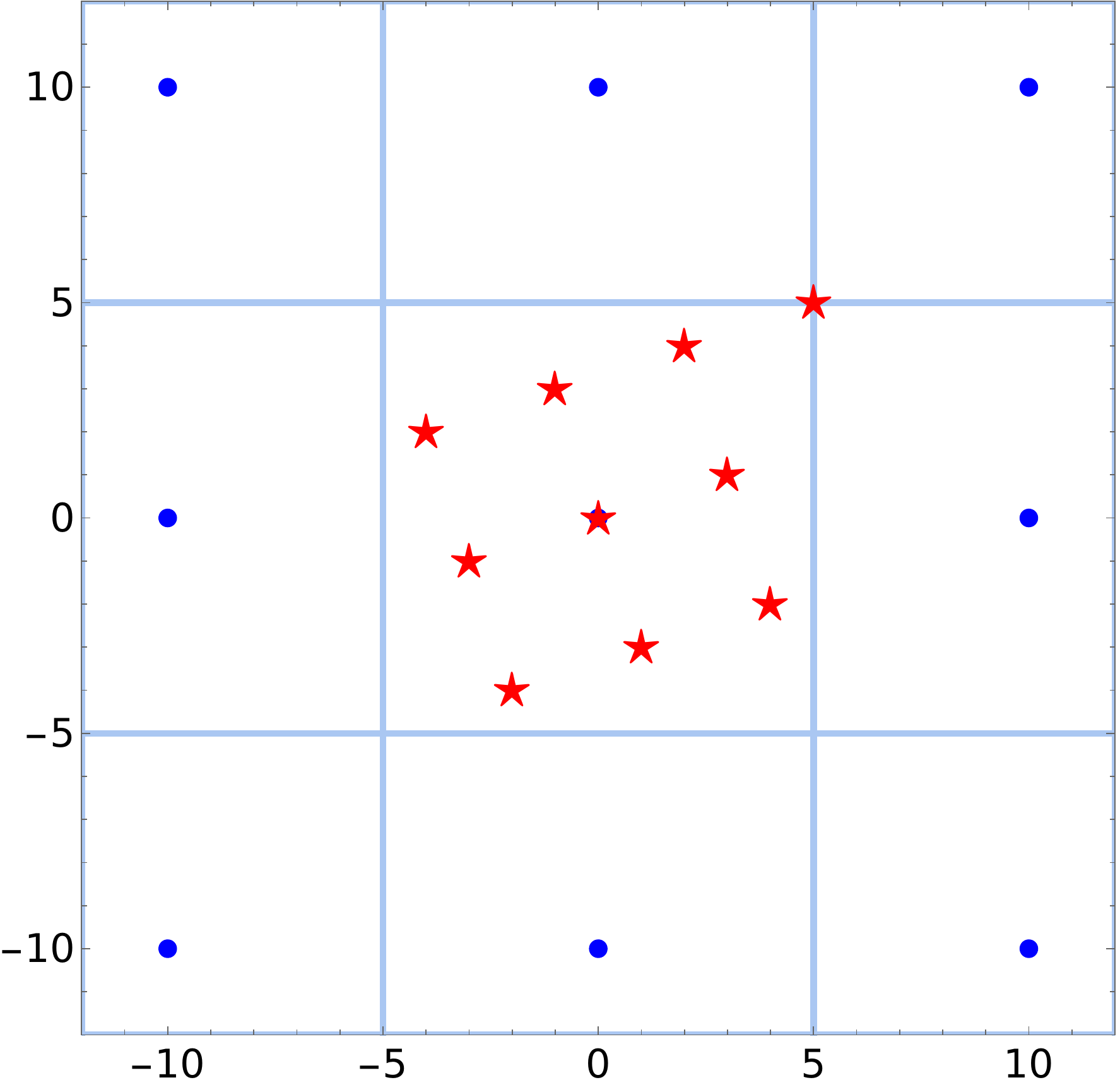}
    \includegraphics[width=0.49\linewidth]{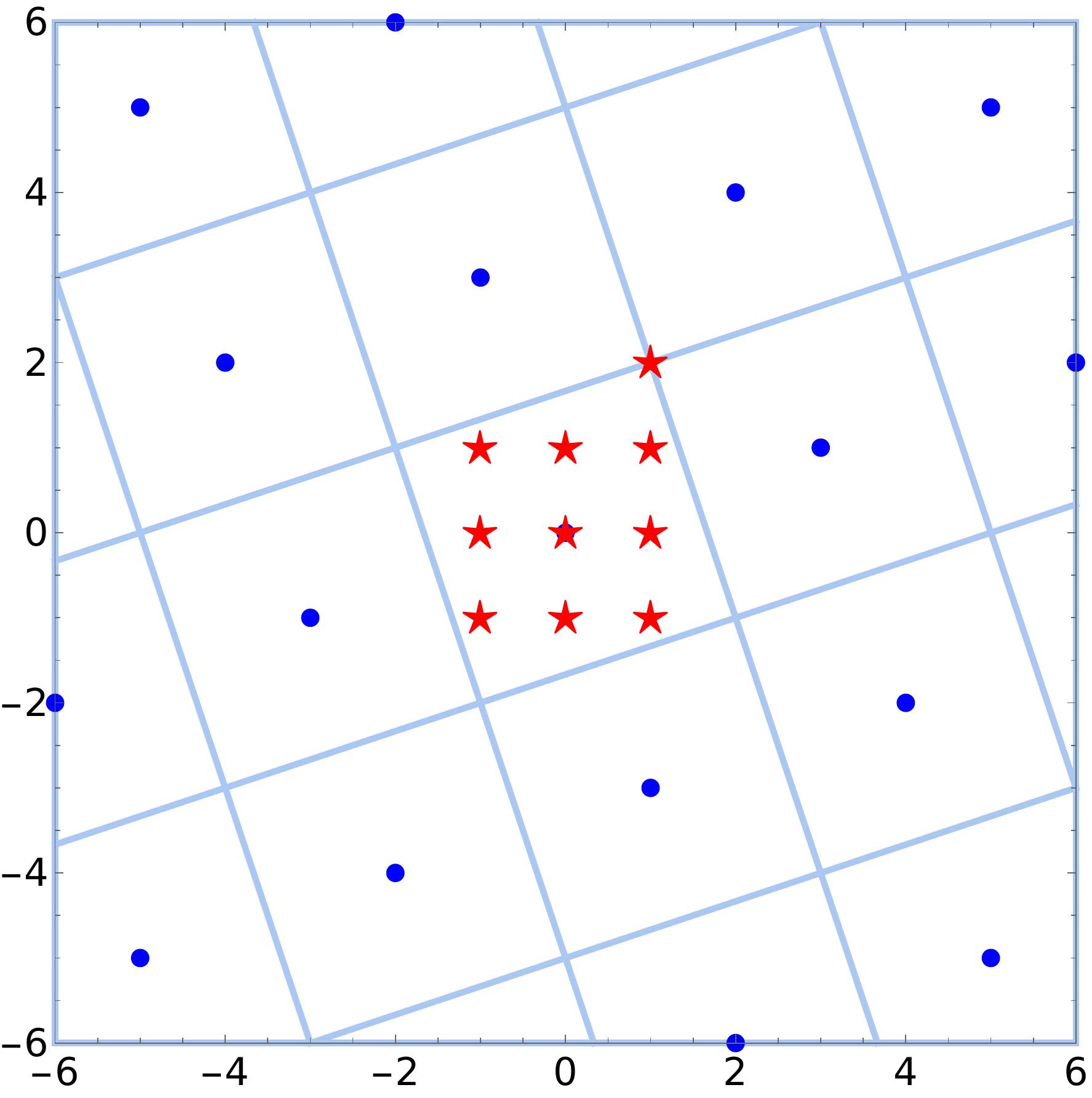}
    \caption{On the left, an example of the CRT lattice index code \(\Psi(\cod_1 \times \cod_2) = \Lambda' / q\mathbb{Z}^2\) where $\cod_1 = \langle(1,1)\rangle\subset\Z_2^2$ and $\cod_2 = \langle (1,2)\rangle\subset\Z_5^2$. On the right, the associated lattice index code from Gaussian integers, \(\varphi(\Lambda_1 / \Lambda' \times \Lambda_2 / \Lambda') = \mathbb{Z}^2 / \Lambda'\), as constructed in \cite{natarajan2015lattice}.}
    \label{examplecodes}
\end{figure}
\end{Obs}

\vspace*{-.5cm}
\section{Conclusion and Perspectives}

In this work, we explore the application of Construction $\pi_A$ lattices over $\Z$ to the index coding problem, presenting the CRT lattice index coding scheme. This approach can take advantage of the multilevel nature of Construction $\pi_A$ and enlarge the scenario of index coding over $\Z$ as considered in \cite{natarajan2015lattice}. It is shown an upper bound for the side information gain when the Construction $\pi_A$ is considered over same rank codes and identified some codes that achieve this gain. It is established conditions that assure uniform side information gain. 

Perspectives for future work, include to investigate index codes using Construction $\pi_A$ over other rings of integers such as Gaussian integers, Eisenstein integers, and maximal orders in quaternion algebras to determine conditions required for achieving uniform side information gain and bounds that can be obtained using these structures.

\section*{Acknowledgment}
This work was partially supported by Coordination for the Improvement of Higher Education Personnel (CAPES - Financial Code 001), by São Paulo Research Foundation (FAPESP), Brazil, 2020/09838 -0 and by Brazilian National Council for Scientific and Technological Development (CNPq), 308399/2023-4.


\bibliographystyle{IEEEtran}
\bibliography{bibliography}

\begin{thebibliography}{10}
\providecommand{\url}[1]{#1}
\csname url@samestyle\endcsname
\providecommand{\newblock}{\relax}
\providecommand{\bibinfo}[2]{#2}
\providecommand{\BIBentrySTDinterwordspacing}{\spaceskip=0pt\relax}
\providecommand{\BIBentryALTinterwordstretchfactor}{4}
\providecommand{\BIBentryALTinterwordspacing}{\spaceskip=\fontdimen2\font plus
\BIBentryALTinterwordstretchfactor\fontdimen3\font minus \fontdimen4\font\relax}
\providecommand{\BIBforeignlanguage}[2]{{%
\expandafter\ifx\csname l@#1\endcsname\relax
\typeout{** WARNING: IEEEtran.bst: No hyphenation pattern has been}%
\typeout{** loaded for the language `#1'. Using the pattern for}%
\typeout{** the default language instead.}%
\else
\language=\csname l@#1\endcsname
\fi
#2}}
\providecommand{\BIBdecl}{\relax}
\BIBdecl

\bibitem{birk1998informed}
Y.~Birk and T.~Kol, ``Informed-source coding-on-demand (iscod) over broadcast channels,'' in \emph{Proceedings. IEEE INFOCOM'98, 17th Joint Conf. IEEE Computer and Communications Societies. Cat. No. 98}, vol.~3.\hskip 1em plus 0.5em minus 0.4em\relax IEEE, 1998, pp. 1257--1264.

\bibitem{tuncel2006slepian}
E.~Tuncel, ``Slepian-wolf coding over broadcast channels,'' \emph{IEEE Transactions on Information Theory}, vol.~52, no.~4, pp. 1469--1482, 2006.

\bibitem{natarajan2018lattice}
L.~Natarajan, Y.~Hong, and E.~Viterbo, ``Lattice codes achieve the capacity of common message gaussian broadcast channels with coded side information,'' \emph{IEEE Transactions on Information Theory}, vol.~64, no.~3, pp. 1481--1496, 2018.

\bibitem{manesh2016}
A.~A. Mahesh and B.~S. Rajan, ``Index coded psk modulation,'' in \emph{2016 IEEE Wireless Communications and Networking Conference}, 2016, pp. 1--7.

\bibitem{natarajan2015lattice}
L.~Natarajan, Y.~Hong, and E.~Viterbo, ``Lattice index coding,'' \emph{IEEE Transactions on Information Theory}, vol.~61, no.~12, pp. 6505--6525, 2015.

\bibitem{huang2017lattice}
Y.~C. Huang, ``Lattice index codes from algebraic number fields,'' \emph{IEEE Transactions on Information Theory}, vol.~63, no.~4, pp. 2098--2112, 2017.

\bibitem{huang2018layered}
Y.~C. Huang, Y.~Hong, E.~Viterbo, and L.~Natarajan, ``Layered space-time index coding,'' \emph{IEEE Transactions on Information Theory}, vol.~65, no.~1, pp. 142--158, 2019.

\bibitem{natarajan2015lattice2}
L.~Natarajan, Y.~Hong, and E.~Viterbo, ``Lattice index coding for the broadcast channel,'' in \emph{2015 IEEE Information Theory Workshop (ITW)}.\hskip 1em plus 0.5em minus 0.4em\relax IEEE, 2015, pp. 1--5.

\bibitem{huang2017construction}
Y.~C. Huang and K.~R. Narayanan, ``Construction $\pi _{A}$ and $\pi _{D}$ lattices: Construction, goodness, and decoding algorithms,'' \emph{IEEE Transactions on Information Theory}, vol.~63, no.~9, pp. 5718--5733, 2017.

\bibitem{huang2018lattices}
Y.-C. Huang, K.~R. Narayanan, and P.-C. Wang, ``Lattices over algebraic integers with an application to compute-and-forward,'' \emph{IEEE Transactions on Information Theory}, vol.~64, no.~10, pp. 6863--6877, 2018.

\bibitem{jsouza2024multilevel}
J.~G.~F. {Souza}, S.~I.~R. {Costa}, and C.~{Ling}, ``{Multilevel lattice codes from Hurwitz quaternion integers},'' \emph{arXiv e-prints}, p. arXiv:2401.10773, Jan. 2024.

\bibitem{Con2013}
N.~J. Conway, J. H.;~Sloane, \emph{Sphere packings, lattices and groups}.\hskip 1em plus 0.5em minus 0.4em\relax Springer Science \& Business Media, 2013, vol. 290.

\bibitem{Sue2018}
S.~I.~R. Costa, F.~Oggier, A.~Campello, J.~C. Belfiore, and E.~Viterbo, \emph{Lattices applied to coding for reliable and secure communications}.\hskip 1em plus 0.5em minus 0.4em\relax Springer, 2017.

\bibitem{zamir2014lattice}
R.~Zamir, \emph{Lattice Coding for Signals and Networks: A Structured Coding Approach to Quantization, Modulation, and Multiuser Information Theory}.\hskip 1em plus 0.5em minus 0.4em\relax Cambridge University Press, 2014.

\bibitem{dougherty2017algebraic}
S.~T. Dougherty, \emph{Algebraic coding theory over finite commutative rings}.\hskip 1em plus 0.5em minus 0.4em\relax Springer, 2017.

\bibitem{juliana2024indexcnmac}
J.~G.~F. Souza and S.~I.~R. Costa, ``Lattice index coding from construction $\pi_{A}$ lattices,'' \emph{Proceeding Series of the Brazilian Society of Computational and Applied Mathematics}, vol.~11, no.~1, 2024.

\bibitem{Pfister_1995}
A.~Pfister, \emph{Quadratic Forms with Applications to Algebraic Geometry and Topology}, ser. London Mathematical Society Lecture Note Series.\hskip 1em plus 0.5em minus 0.4em\relax Cambridge University Press, 1995.

\end{thebibliography}

\end{document}